\newtheorem{theorem}{Theorem}
\newcommand{\cmark}{\ding{51}}
\newcommand{\xmark}{\ding{55}}
\begin{document}
\begin{frontmatter}
\title{Full and fast calibration of \linebreak the Heston stochastic volatility model}

\author[add1]{Yiran Cui\corref{cor1}}
\ead{y.cui.12@ucl.ac.uk}
\author[add2]{Sebastian del Ba\~{n}o Rollin}
\ead{s.delbanorollin@qmul.ac.uk}
\author[add1,add3]{Guido Germano}
\ead{g.germano@ucl.ac.uk, g.germano@lse.ac.uk}

\address[add1]{Financial Computing and Analytics Group, Department of Computer Science, \linebreak University College London, United Kingdom}
\address[add2]{School of Mathematical Science, Queen Mary University of London, United Kingdom}
\address[add3]{Systemic Risk Centre, London School of Economics and Political Science, \linebreak United Kingdom}

\cortext[cor1]{Corresponding author. Tel.: +44 20 3108 7105.}

\begin{abstract}
This paper presents an algorithm for a complete and efficient calibration of the Heston stochastic volatility model. 
We express the calibration as a nonlinear least squares problem. 
We exploit a suitable representation of the Heston characteristic function and modify it to avoid discontinuities caused by branch switchings of complex functions.
Using this representation, we obtain the analytical gradient of the price of a vanilla option with respect to the model parameters, which is the key element of all variants of the objective function. The interdependency between the components of the gradient enables an efficient implementation which is around ten times faster than a numerical gradient.
We choose the Levenberg-Marquardt method to calibrate the model and do not observe multiple local minima reported in previous research.
Two-dimensional sections show that the objective function is shaped as a narrow valley with a flat bottom. Our method is the fastest calibration of the Heston model developed so far and meets the speed requirement of practical trading.
\end{abstract}

\begin{keyword}
pricing, Heston model, model calibration, optimisation, Levenberg-Marquardt method.
\end{keyword}
\end{frontmatter}

\section{Introduction}
Pricing financial derivatives is an established problem in the operational research literature; see for example Fusai et al.\ \cite{fusai2016} and references therein contained. Here we deal with the calibration of the Heston stochastic volatility model, which is important and popular for derivatives pricing. The particular topic of model calibration also involves numerical optimisation, which is a core subject of operational research.


A sophisticated model may reflect the reality better than a simple one,
but usually is more challenging to implement and calibrate.
This is especially true with mathematical models for the pricing of derivatives and the estimation of risk.
The most basic model, introduced by Black and Scholes (BS) \cite{BlackScholes}, assumes that the underlying price follows a geometric Brownian motion with constant drift and volatility.
The price of a vanilla option is then given as a function of a single parameter, the volatility.
However, the BS model does not adequately take into account essential characteristics of market dynamics such as fat tails, skewness and the correlation between the value of the underlying and its volatility.
It has also been observed that the volatility starts to fluctuate when the market reacts to new information. Thus, several extensions of the BS model were suggested thereafter, including the family of stochastic volatility (SV)
models, which introduces a second Brownian motion to describe the fluctuation of the volatility.
We study one of the most important SV models; it was proposed by Heston \cite{Heston} and is defined by the system of stochastic differential equations
\begin{subequations}\label{eq:HestonSVProcess}
\begin{align}
\mathrm{d}S_t &= \mu S_t \mathrm{d}t + \sqrt{v_t}S_t \mathrm{d}W_t^{(1)},\\
\mathrm{d}v_t &= \kappa(\bar{v}-v_t) \mathrm{d}t + \sigma \sqrt{v_t}\mathrm{d}W_t^{(2)},
\end{align}
with
\begin{equation}
\mathrm{d}W_t^{(1)}\mathrm{d}W_t^{(2)} = \rho \mathrm{d}t,
\end{equation}
\end{subequations}
where $S_t$ is the underlying price and $v_t$ its variance; the parameters $\kappa, \bar{v}, \sigma, \rho$ are respectively called the mean-reversion rate, the long-term variance, the volatility of volatility, and the correlation between the Brownian motions $W_t^{(1)}$ and $W_t^{(2)}$ that drive the underlying and its variance; moreover there is a fifth parameter $v_0$, the initial value of the variance.

Model calibration is as crucial as the model itself.
Calibration consists in determining the parameter values so that the model reproduces market prices as accurately as possible.
Both the accuracy and the speed of calibration are important because practitioners use the calibrated parameters to price a large number of complicated derivative contracts and to develop high-frequency trading strategies.

In this paper, we propose to efficiently calibrate the Heston model using an analytical gradient and numerical optimisation.
In Section~\ref{sec:previouswork}, we briefly review the existing research.
In Section~\ref{sec:problem}, we formulate the objective function and derive its analytical gradient.
In Section~\ref{sec:calibration}, we present a complete algorithm to calibrate the Heston model using the Levenberg-Marquardt (LM) method.
We also discuss some points where a carefully designed numerical scheme may improve the performance.
In Section~\ref{sec:results}, we present numerical results.

Throughout, we use bold uppercase letters for matrices, e.g.\ $\bm{J}$, and bold lowercase letters for column vectors, e.g.\ $\bm{\theta}$;
 a superscript $^\intercal$ for the transpose of a matrix or vector;
$\bm{e}$ for a vector of all ones $[1, \dots, 1]^\intercal$;
 $\mathbbm{E}[\cdot]$ for expectations;
$\mathbbm{1}_A(\cdot)$ for the indicator function of the set $A$;
 $\mathrm{Re}(\cdot)$ for the real part and $\mathrm{Im}(\cdot)$ for the imaginary part of a complex number;
$\|\cdot\|$ for the $l_2$-norm;
$\|\cdot\|_\infty$ for the $l_\infty$-norm;
$\log$ for the natural logarithm.

\section{Previous work on Heston model calibration}\label{sec:previouswork}
In the literature, there are two main approaches to calibrate the Heston model: \textit{historical} and \textit{implied}.
The first fits historical time series of the prices of an option with a fixed strike and maturity, typically by the maximum likelihood method or the efficient method of moments \cite{ai2007maximum,fatone2014calibration,hurn2014estimating}. 
The second fits the volatility surface of an underlying at a fixed time, i.e., options with several strikes and
maturities, to obtain the implied parameter set.
Our work follows the second approach, as that is what is used in real-time pricing and risk management.
In the following, we survey obstacles and existing methods for
the Heston model calibration related to the second approach.

\subsection{Recognised difficulties}
Firstly, the calibration is in a five-dimensional space. There is no consensus among researchers on
whether the objective function for the Heston model calibration is convex or irregular.
The results of some proposed methods \cite{ChenBin, GilliCalibrating, MikhailovNogel} depend on the initial point, which
was attributed to a non-convex objective function, but might simply reflect on the inadequacy of the methods.
To find a reasonable initial guess, short-term or long-term asymptotic rules are used;
see Jacquier and Martini \cite{ZeliadeHeston} for a detailed review.
However, recently Gerlich et al.~\cite{gerlich2012parameter}
claimed a convergence to the unique solution independent of the initial guess and suggested
that the Heston calibration problem may have some inherent structure leading to a single stationary point.
On the other hand, dependencies among the parameters do exist. For example, it is known that
$\sigma$ and $\kappa$ offset each other: it is possible that a parameter set with
certain values of $\sigma$ and $\kappa$ gives a fit comparable to a set with different values of $\sigma$ and $\kappa$.
Intuitively, the fact that different parameter combinations yield similar values of the objective function could be due to the objective function being flat close to the optimum; see Section~\ref{sec:results} in this paper.

Secondly, the analytical gradient for the Heston calibration problem is hard to find and has not been available so far because it was believed that the expression of the Heston characteristic function is overly complicated to provide an insightful analytical gradient: of course, a gradient can be obtained with symbolic algebra packages, but the resulting expressions are intractable.
Instead, numerical gradients obtained by finite difference methods have been used in gradient-based optimisation methods;
however, numerical gradients have a larger computational cost and a lower accuracy.

\subsection{Existing methods}
We review some heuristics to reduce the dimension of the calibration and then the optimisation methods
that have been applied so far.

\subsubsection{Heuristics for dimension reduction}
Since the Heston parameters are closely related to the shape of the implied volatility surface \cite{Clark, Gatheral2006volatility, GilliCalibrating,janek2011fx} ($v_0$  controls the position of the volatility smile, $\rho$ the skewness, $\kappa$ and $\sigma$ the convexity, and $\kappa$ times the difference between $v_0$ and $\bar{v}$ the term structure of implied volatility), efforts have been made to simplify the calibration to a lower dimension by presuming some of the parameter values 
based on knowledge available for the specific volatility surface.
The initial variance $v_0$ is usually set to the short-term at-the-money (ATM) BS implied variance, which is based on the term structure of the
BS implied volatility in the Heston model \cite[p.~34-35]{Gatheral2006volatility}. A practical
calibration experiment \cite[p.~29-30]{ChenBin} verified the linearity between the initial variance and the BS
implied variance for maturities in the range of 1 to 2 months. Clark \cite[Eq.~(7.3)]{Clark} suggested the
heuristic assumption $\kappa = 2.75/T$ and $\bar{v} = \sigma_{\mathrm{ATM}}(T)$, where $\sigma_{\mathrm{ATM}}(T)$
is the ATM BS implied volatility with maturity $T$. Chen \cite{ChenBin} proposed a fast intraday recalibration
by fixing $\kappa$ to the same as yesterday's and $v_0$ to the 2-month ATM implied volatility, which are
heuristics actually adopted in the industry. These assumptions help with an incomplete calibration, but
may misguide the iterate to a limited domain and thus to a wrong convergence point.

\subsubsection{Stochastic optimisation methods}
Researchers who believed that a descent
direction is unavailable have devoted their attention to
stochastic optimisation methods, including
Wang-Landau \cite{ChenBin}, differential evolution and particle swarm \cite{GilliHeuristic},
simulated annealing \cite{moodley2009heston}, etc. To increase the robustness,
a deterministic search such as Nelder and Mead using the MATLAB function {\tt{fminsearch}}
is often combined with these stochastic optimisation algorithms. Almost all research using stochastic techniques
reports issues with performance. GPU technology has been applied with simulated
annealing to speed up the calibration \cite{GPU}, however a speed of $9.7$ hours with one GPU is still too
slow for real-time use.

\subsubsection{Deterministic optimisation methods}
Deterministic optimisation solvers available with commercial packages have been proved to be
unstable as the performance largely depends on the quality of the initial guess: this applies to the Excel built-in solver \cite{MikhailovNogel} and to the MATLAB solver
{\tt{lsqnonlin}} \cite{BauerFast2012, fusai2008, moodley2009heston}. Gerlich et al.~\cite{gerlich2012parameter}
adopted a Gauss-Newton framework and kept the feasibility of the iterates by projecting to
a cone determined by the constraints. The gradient of the objective function was calculated by
finite differences and thus costs a large number of function evaluations.
\medskip

To sum up, existing calibration algorithms are either based on ad hoc assumptions or not fast or stable enough for practical use. In this work, we will focus on deterministic optimisation methods without any presumption on the values of the parameters.

\section{Problem formulation and gradient calculation}\label{sec:problem}
The idea of calibrating a volatility model is to minimise the difference between the vanilla option price calculated with the model and the one observed in the market. In this section, we first formulate the calibration problem in a least squares form. Then, we present the pricing formula of a vanilla option under the Heston model with four algebraically equivalent representations of the characteristic function, discussing their numerical stability and suitability for analytical derivation. We calculate the analytical gradient of the objective function which can be used in any gradient-based optimisation algorithm.

\subsection{The inverse problem formulation}\label{sec:TheInverseProblem}
Denote by $C^*(K_i,T_i)$ the market price of a vanilla call option with strike $K_i$ and
maturity $T_i$, $C(\bm{\theta}; K_i, T_i)$ the price computed via the Heston analytical formula \eqref{eq:HestonCall} with the parameter vector $
\bm{\theta}:=\left[v_0, \bar{v}, \rho,\kappa,\sigma \right]^\intercal$. 
We assemble the residuals for the $n$ options to be calibrated
\begin{equation}\label{eq:def_rj}
    r_i(\bm{\theta}):= C(\bm{\theta}; K_i, T_i) - C^*(K_i, T_i), \qquad i = 1, \dots, n
\end{equation}
in the residual vector $\bm{r}(\bm{\theta}) \in \mathbb{R}^{n}$, i.e.,
\begin{equation}\label{eq:def_r}
    \bm{r}(\bm{\theta}):= \left[r_1(\bm{\theta}),r_2(\bm{\theta}), \dots ,r_n(\bm{\theta})\right]^\intercal.
\end{equation}

We treat the calibration of the Heston model as an inverse problem in the nonlinear least squares form
\begin{equation}\label{eq:LSform}
    \min_{\bm{\theta}\in\mathbb{R}^{m}} f(\bm{\theta}),
\end{equation}
where $m=5$ indicates the dimension, and
\begin{equation}\label{eq:defOfLS}
    f(\bm{\theta}):=  \frac{1}{2}\|\bm{r}(\bm{\theta})\|^2=\frac{1}{2} \bm{r}^\intercal(\bm{\theta}) \bm{r}(\bm{\theta}).
\end{equation}
Since  there are many more market observations than parameters to be found, i.e., $n\gg m$, the calibration problem is overdetermined.

Before applying any technique to solve the problem \eqref{eq:LSform}--\eqref{eq:defOfLS}, one needs to bear in mind that the evaluation of $C(\bm{\theta}; K_i, T_i)$ is expensive for the purpose of
calibration; hence, one would like to minimise the number of computations of Eq.~\eqref{eq:HestonCall} when designing the algorithm.
Moreover, the explicit gradient of $C(\bm{\theta}; K_i, T_i)$ with respect to $\bm{\theta}$ is not available in the literature as it is deemed to be overly complicated.
This is indeed true if one starts from the commonly used expressions for the characteristic function by Heston \cite[Eq.~(17)]{Heston} or Schoutens et al.~\cite[Eq.~(17)]{schoutens2003perfect}.
However, as shown in the next section, a more convenient choice of the functional form of the characteristic function by del Ba\~{n}o Rollin et al.~\cite[Eq.~(6)]{delBano2010}
eases the derivation of its analytical gradient.

\subsection{Pricing formula of a vanilla option and representations of the characteristic function}
For a spot price $S_0$ and an interest rate $r$, the price of a vanilla call option with strike $K$ and maturity $T$ is
\begin{align}\label{eq:callpv}
C(\bm{\theta}; K,T) &= e^{-rT} \mathbbm{E}[(S_T-K) \mathbbm{1}_{\left\{S_T\geq K\right\} }(S_T)]\\
           &=e^{-rT}\left( \mathbbm{E}[S_T \mathbbm{1}_{\left\{S_T\geq K\right\} }(S_T)] - K \mathbbm{E}[ \mathbbm{1}_{\left\{S_T\geq K\right\} }(S_T)]\right) \nonumber \\
           &= S_0P_1 - e^{-rT}KP_2. \nonumber
\end{align}
In the Heston model, $P_1$ and $P_2$ are solutions to certain pricing PDEs \cite[Eq.~(12)]{Heston} and are given as
\begin{align}
P_1 &= \frac{1}{2}+\frac{1}{\pi} \int_0^\infty \mathrm{Re} \left( \frac{e^{-iu \log K} }{iu F} \phi(\bm{\theta};u-i,T)\right) \mathrm{d}u,\\
P_2 &= \frac{1}{2}+\frac{1}{\pi} \int_0^\infty \mathrm{Re} \left( \frac{e^{-iu \log K} }{iu}\phi(\bm{\theta};u,T) \right) \mathrm{d}u,
\end{align}
where $i$ is the imaginary unit, $F:=S_0e^{rT}$ is the forward price and $\phi(\bm{\theta};u,t)$ is the characteristic function of the logarithm of the stock price process. Thus, the formula for
pricing a vanilla call option becomes
\begin{multline}\label{eq:HestonCall}
    C(\bm{\theta};K,T) = \frac{1}{2} (S_0 - e^{-rT}K)
    +\frac{e^{-rT}}{\pi}  \left[ \int_0^\infty\mathrm{Re} \left(\frac{e^{-iu\log K}}{iu} \phi(\bm{\theta};u-i, T)\right)\mathrm{d} u \right.  \\ \left. - K\int_0^\infty \mathrm{Re} \left(\frac{e^{-iu\log K}
}{iu}\phi(\bm{\theta};u, T)\right) \mathrm{d} u\right].
\end{multline}
The characteristic function was originally given by Heston as \cite[Eq.~(17)]{Heston}
\begin{multline}\label{eq:HestonCharFn}
    \phi(\bm{\theta};u, t) = \exp\left\{iu(\log S_0 + rt)
    + \frac{\kappa\bar{v}}{\sigma^2} \left[ (\xi + d)t - 2\log{\frac{1-g_1e^{dt}}{1-g_1}}\right] \right. \\ + \left.\frac{v_0}{\sigma^2} (\xi + d) \frac{1-e^{dt}}{1-g_1e^{dt}}\right\},
\end{multline}
where
\begin{subequations}\label{eq:CharFn_subdefinitions}
\begin{align}
    \xi &:= \kappa-\sigma\rho i u,\label{eqeq:CharFn_subdefinitions_xi}\\
    d &:= \sqrt{\xi^2 + \sigma^2 (u^2+iu)}, \label{eq:CharFn_subdefinitions_d}\\
    g_1 &:= \frac{\xi+d}{\xi -d}.
\end{align}
\end{subequations}

Kahl and J\"ackel~\cite{Kahl2005} pointed out that when evaluating this form as a function of $u$ for moderate to long maturities, discontinuities appear because of the branch switching of the complex power function
$G^\alpha(u) = \exp(\alpha\log G(u))$ with $G(u) := (1-g_1e^{dt})/(1-g_1)$ and $\alpha := \kappa\bar{v}/\sigma^2$, which appears in Eq.~\eqref{eq:HestonCharFn} as a multivalued complex logarithm.
This depends on the fact that $G(u)$ has a shape of a spiral as $u$ increases, and when it repeatedly crosses the negative real axis, the phase of $G(u)$ jumps from $-\pi$ to $\pi$. Then the phase of $G^\alpha(u)$ changes from $-\alpha\pi$ to $\alpha\pi$, causing a discontinuity when $\alpha$ is not a natural number.

Albrecher et al.~\cite{Albrecher2007} found that this happens when the principal value of the complex square root $d$ is selected, as most numerical implementations of these functions do, but can be avoided if the second value is used instead.
They proved that this alternative representation, originally proposed by Schoutens et al.~\cite[Eq.~(17)]{schoutens2003perfect}, is continuous and gives numerically stable prices in the full-dimensional and unrestricted parameter space:
\begin{multline}\label{eq:SchoutensCharFn}
    \phi(\bm{\theta}; u, t) = \exp\left\{
    iu(\log S_0 + rt)+\frac{\kappa \bar{v}} {\sigma^2} \left[ (\xi - d)t - 2\log{\frac{1-g_2e^{-dt}}{1-g_2}}\right]\right. \\ + \left.\frac{v_0}{\sigma^2} (\xi - d) \frac{1-e^{-dt}}{1-g_2e^{-dt}}\right\},
\end{multline}
where
\begin{equation}
g_2 := \frac{\xi-d}{\xi + d} = \frac{1}{g_1}.
\end{equation}

Another equivalent form of the characteristic function was proposed later by del Ba\~{n}o Rollin et al.~\cite[Eq.~(6)]{delBano2010}. We correct the expression in that paper by adding the term $-{t \kappa \bar{v} \rho i u }/{\sigma}$ to the exponent, resulting in
\begin{equation}\label{eq:SebCharFn}
    \phi(\bm{\theta};u, t) = \exp\left[ iu(\log S_0 + rt)
    - \frac{t \kappa \bar{v} \rho i u }{\sigma}
    - v_0A\right] B^{2\kappa \bar{v}/{\sigma^2}},
\end{equation}
where
\begin{subequations}\label{eq:SebCharFn_subdefinitions}
    \begin{align}
    A &:= \frac{A_1}{A_2},\\
    A_1 &:= (u^2+iu)\sinh \frac{dt}{2},\\
    A_2 &:= d\cosh \frac{dt}{2}+\xi \sinh \frac{dt}{2},\label{eq:A2}\\
    B &:=\frac{de^{\kappa t/2}} {A_2}.
   \end{align}
\end{subequations}
Del Ba\~{n}o Rollin et al.\ introduced their expression to analyse the log-spot density, and since then it has not been used for any other purpose. It was obtained by manipulating the complex moment generating function; besides being more compact, it replaces the exponential functions in the exponent with hyperbolic functions, which makes the derivatives easier. Therefore, we will use this expression to obtain the analytical gradient.

However, the same discontinuity problem pointed out by Kahl and J\"ackel appears here too. It comes from the factor $B^{2\kappa\bar{v}/\sigma^2}$, or more specifically from the denominator of $B$, i.e., $A_2$.
Fig.~\ref{fig:trajectory_A2} shows a trajectory of $\gamma(u):= (A_2(u)\log\log |A_2(u)|)/|A_2(u)|$. The double-logarithmic scaling of the radius compensates the rapid outward movement of the spiralling trajectory of $A_2(u)$ \cite{Albrecher2007,Kahl2005}. For the curve we adopt the same hue $h\in[0,1)$ as Kahl and J\"ackel \cite{Kahl2005}, $h:=\log_{10}(u+1) \mod 1$, which means that segments of slowly varying colour represent rapid movements of $A_2(u)$ as a function of $u$.
\begin{figure}[h]
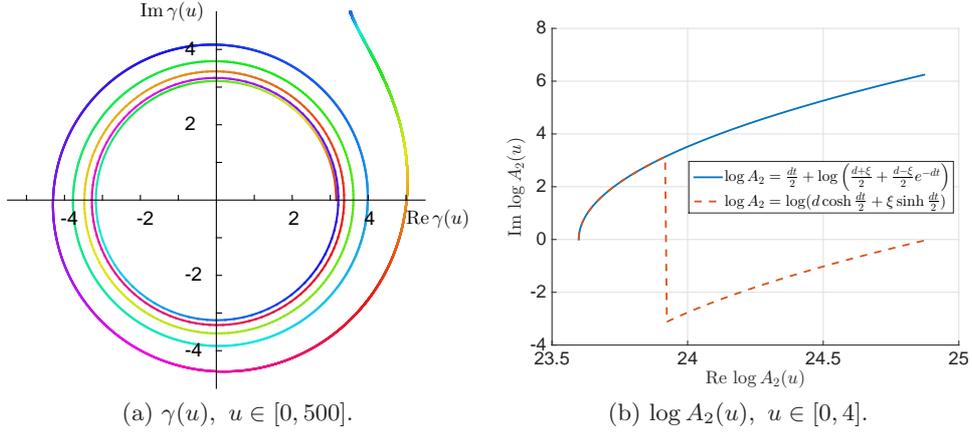

   \subfloat[{$\gamma(u), \ u\in [0,500]$.}]{\includegraphics[width=0.48\textwidth]{mispricing_gamma}\label{fig:trajectory_A2}}\hfill
   \subfloat[{$\log A_2(u),\ u\in [0,4]$.}]{\includegraphics[width=0.48\textwidth]{mispricing_logA2_trajectory}\label{fig:trajectory_logA2}}
\caption{Trajectories of $\gamma(u)$ and two equivalent forms of $\log A_2(u)$ in the complex plane. The curves were generated using the parameters in Table~\ref{tab:parameter_exp1} with maturity $T = 15$.\label{fig:trajectory}}
\end{figure}
\begin{table}[h]
\caption{Parameters specification.}
\begin{center}\footnotesize
\renewcommand{\arraystretch}{1.3}
\begin{tabular}{cr|cr}\hline\hline
\multicolumn{2}{c|}{Model parameters} & \multicolumn{2}{c}{Market parameters}\\\hline
     $\kappa$  &  $3.00$ & $S_0$ & $1.00$ \\
     $\bar{v}$ &  $0.10$ & $K$   & $1.10$ \\
     $\sigma$  &  $0.25$ & $r$   & $0.02$ \\
     $\rho$    & $-0.80$ &       &        \\
     $v_0$     &  $0.08$ &       &        \\ \hline\hline
\end{tabular}\label{tab:parameter_exp1}
\end{center}
\end{table}

We thus modify the representation by rearranging $\log A_2$ to
\begin{subequations}\label{eq:logA2}
\begin{align}
\log A_2 &= \log \left( d\cosh \frac{dt}{2} + \xi \sinh \frac{dt}{2} \right)\\
    &= \log \left( d\frac{e^{dt/2} + e^{-dt/2}}{2} + \xi \frac{e^{dt/2} - e^{-dt/2}}{2} \right)\\
    &= \log \left( \frac{d+\xi}{2}e^{dt/2} + \frac{d-\xi}{2}e^{-dt/2} \right)\\
    &= \log \left[ e^{dt/2} \left(\frac{d+\xi}{2} + \frac{d-\xi}{2}e^{-dt} \right)\right]\\
    &= \frac{dt}{2} + \log\left(\frac{d+\xi}{2} + \frac{d-\xi}{2}e^{-dt} \right).\label{eq:logA2_final}
\end{align}
\end{subequations}
Fig.~\ref{fig:trajectory_logA2} shows the trajectories of the two equivalent formulations of $\log A_2$. The rearrangement \eqref{eq:logA2_final} resolves the discontinuities arising from the logarithm with Eq.~\eqref{eq:A2} as an argument.
Then we insert Eq.~\eqref{eq:logA2_final} into $\log B$ and denote the final expression as $D$:
\begin{subequations}\label{eq:logB}
\begin{align}
\log B &= \log d + \frac{\kappa t}{2} - \log A_2\\
       &= \log d + \frac{(\kappa-d) t}{2}  - \log\left(\frac{d+\xi}{2} + \frac{d-\xi}{2} e^{-dt}\right)
       =: D.\label{eq:D}
\end{align}
\end{subequations}

So we propose a new representation of the characteristic function which is algebraically equivalent to all the previous expressions and does not show the discontinuities of Eqs.~\eqref{eq:HestonCharFn} and \eqref{eq:SebCharFn} for large maturities:
\begin{equation}\label{eq:CuiCharFn}
    \phi(\bm{\theta};u,t) = \exp\left\{ iu(\log S_0 + rt)
    - \frac{t \kappa \bar{v} \rho i u }{\sigma} - v_0A
    + \frac{2\kappa\bar{v}}{\sigma^2} D \right\}.
\end{equation}

\begin{figure}[h]
\centering
\includegraphics[width=0.48\textwidth]{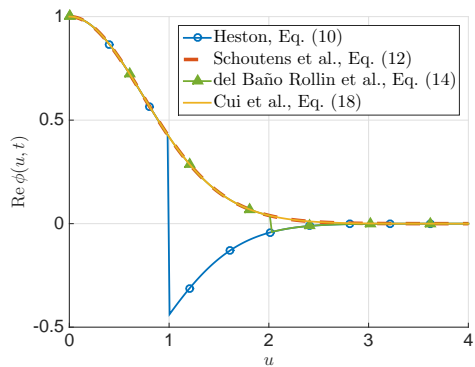}
\caption{Four equivalent representations of the Heston characteristic function. The curves were generated using the parameters in Table~\ref{tab:parameter_exp1} with maturity $T = 15$. Eq.~\eqref{eq:HestonCharFn} jumps at $u=1$, Eq.~\eqref{eq:SebCharFn} jumps at $u=2$, while Eqs.~\eqref{eq:SchoutensCharFn} and \eqref{eq:CuiCharFn} are continuous.\label{fig:charfncomparison}}
\end{figure}

We have discussed four equivalent representations of the Heston characteristic function, three from previous research and one newly proposed here by us. We compare them in Fig.~\ref{fig:charfncomparison}: the plot of our expression is continuous and overlaps Schoutens et al.'s, while the other two exhibit discontinuities due to the multivalued complex functions. Moreover our expression, like the one by del Ba\~{n}o Rollin et al.\ from which it was obtained, has the advantage of being easily derivable, as shown in the next section. These properties are summarised in Table~\ref{tab:charfn}.

\begin{table}[h]
\caption{Properties of the four representations of the Heston characteristic function.}\label{tab:charfn}
\begin{center}\footnotesize
\renewcommand{\arraystretch}{1.3}
\begin{tabular}{l|cc}\hline\hline
                & Numerically continuous & Easily derivable \\
     \hline
     Heston                 & \xmark     & \xmark  \\
     Schoutens et al.       & \cmark     & \xmark  \\
     del Ba\~{n}o Rollin et al. & \xmark     & \cmark  \\
     Cui et al.             & \cmark     & \cmark  \\ \hline\hline
\end{tabular}
\end{center}
\end{table}

\subsection{Analytical gradient}\label{sec:analyticalGradient}
We use $\bm{\nabla} = \partial/\partial \bm{\theta}$ for the gradient operator with respect to the parameter vector $\bm{\theta}$ and $\bm{\nabla}\bm{\nabla}^\intercal$ for the Hessian operator. For convenience, we omit to write the dependence of the residual vector $\bm{r}$ on $\bm{\theta}$.

\subsubsection{The basic theorem of the analytical gradient}
Let $\bm{J} =\bm{\nabla} \bm{r}^\intercal \in \mathbb{R}^{m\times n}$ be the Jacobian matrix of the residual vector $\bm{r}$ with elements
\begin{equation}
J_{ji} = \left[ \frac{\partial r_i}{\partial \theta_j} \right]= \left[ \frac{\partial C(\bm{\theta}; K_i, T_i)}{\partial \theta_j} \right],
\end{equation}
and $\bm{H}(r_i):=\bm{\nabla}\bm{\nabla}^\intercal r_i \in \mathbb{R}^{m\times m}$ be the Hessian matrix of each residual $r_i$ with elements
\begin{equation}
H_{jk}(r_i)= \left[ \frac{\partial^2 r_i}{\partial \theta_j\partial \theta_k} \right].
\end{equation}
Following the nonlinear least squares formulation \eqref{eq:LSform}--\eqref{eq:defOfLS}, one can easily write the gradient and Hessian of the objective function $f$ as
\begin{subequations}\label{eq:JacobianHessianForGeneralLS}
\begin{align}
\bm{\nabla} f &= \bm{J}\bm{r},\label{JacobianLS}\\
\bm{\nabla}\bm{\nabla}^\intercal f &= \bm{J} \bm{J}^\intercal + \sum^n_{i=1} r_i \bm{H}(r_i). \label{HessianLS}
\end{align}
\end{subequations}

\begin{theorem}
Assume that an underlying asset $S$ follows the Heston process \eqref{eq:HestonSVProcess}.
Let $\bm{\theta}:=\left[v_0, \bar{v}, \rho,\kappa,\sigma \right]^\intercal$
be the parameters in the Heston model, $C(\bm{\theta}; K,T)$ be the price of a vanilla call option on $S
$ with strike $K$ and maturity $T$.
Then the gradient of $C(\bm{\theta}; K,T)$ with respect to $\bm{\theta}$ is
\begin{multline}\label{eq:dC_dtheta}
\bm{\nabla} C(\bm{\theta}; K,T) = \frac{e^{-rT}}{\pi} \left[ \int_0^\infty\mathrm{Re} \left(\frac{K^{-iu}}{iu}\bm{\nabla}\phi(\bm{\theta};u-i, T)\right)
\mathrm{d} u \right. \\ - \left. K\int_0^\infty  \mathrm{Re} \left(\frac{K^{-iu}}{iu} \bm{\nabla} \phi(\bm{\theta};u, T) \right)\mathrm{d}u\right],
\end{multline}
where $\bm{\nabla} \phi(\bm{\theta}; u, T) = \phi(\bm{\theta}; u, T)\bm{h}(u)$, $\bm{h}(u):= \left[ h_1(u), h_2(u), \dots, h_5(u)\right]^\intercal$ with elements
\begin{subequations}\label{eq:dphi_dtheta}
\begin{align}
\label{eq:dphi_dv0}h_1(u) &=  -A,\\
\label{eq:dphi_dvinf}h_2(u) &= \frac{2\kappa}{\sigma^2}D - \frac{t\kappa \rho i u}{\sigma},\\
\label{eq:dphi_drho}h_3(u) &= -v_0\frac{\partial A}{\partial \rho} + \frac{2\kappa \bar{v}}{\sigma^2 d}\left(\frac{\partial d}{\partial \rho} - \frac{d}{A_2}\frac{\partial A_2}{\partial
\rho}\right) - \frac{t\kappa \bar{v} i u}{\sigma},\\
\label{eq:dphi_dkap}h_4(u) &= \frac{v_0}{\sigma i u} \frac{\partial A}{\partial \rho} + \frac{2\bar{v}}{\sigma^2}D + \frac{2\kappa \bar{v}}{\sigma^2B} \frac{\partial B}{\partial
\kappa} - \frac{t \bar{v} \rho i u}{\sigma},\\
\label{eq:dphi_dsig}h_5(u) &= -v_0 \frac{\partial A}{\partial \sigma} - \frac{4\kappa \bar{v}}{\sigma^3} D + \frac{2\kappa \bar{v}}{\sigma^2 d}\left(\frac{\partial d}{\partial
\sigma} - \frac{d}{A_2}\frac{\partial A_2}{\partial \sigma}\right) + \frac{t \kappa \bar{v} \rho i u}{\sigma^2};
\end{align}
\end{subequations}
$\xi, d, A, A_1, A_2, B, D, \phi(\bm{\theta}; u, T)$ are defined in Eqs.~\eqref{eqeq:CharFn_subdefinitions_xi}, \eqref{eq:CharFn_subdefinitions_d}, \eqref{eq:SebCharFn_subdefinitions}, \eqref{eq:D} and \eqref{eq:CuiCharFn}, respectively.
\end{theorem}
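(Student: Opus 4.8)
The plan is to differentiate the pricing formula \eqref{eq:HestonCall} under the integral sign and then reduce everything to the gradient of $\log\phi$. The leading term $\tfrac12(S_0 - e^{-rT}K)$ is constant in $\bm\theta$, and for $\bm\theta$ in the interior of the admissible domain the integrands of \eqref{eq:HestonCall} together with their $\bm\theta$-derivatives decay fast enough in $u$ (the characteristic function is analytic in $\bm\theta$, and $|\phi(\bm\theta;u,t)|$ and $|\bm\nabla\phi(\bm\theta;u,t)|$ are integrable against $1/u$ away from the origin), so one may exchange $\bm\nabla$ with $\int_0^\infty\!\cdots\,\mathrm{d}u$; this produces \eqref{eq:dC_dtheta} with $\bm\nabla\phi$ in place of $\phi$, the gradient in the first integral being evaluated at the shifted argument $u-i$. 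Since $\phi$ is an exponential, Eq.~\eqref{eq:CuiCharFn}, the chain rule gives $\bm\nabla\phi = \phi\,\bm\nabla(\log\phi)$, so it remains to show that $\bm h := \bm\nabla\log\phi$ has the components \eqref{eq:dphi_dtheta}, where $\log\phi = iu(\log S_0 + rt) - t\kappa\bar v\rho iu/\sigma - v_0 A + (2\kappa\bar v/\sigma^2)D$.

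Next I would differentiate the auxiliary quantities in the order in which they are built. From $\xi = \kappa - \sigma\rho iu$ one reads off $\partial\xi/\partial\kappa = 1$, $\partial\xi/\partial\rho = -\sigma iu$, $\partial\xi/\partial\sigma = -\rho iu$; implicit differentiation of $d^2 = \xi^2 + \sigma^2(u^2+iu)$ gives $\partial d/\partial\theta_j$; then $\partial A_2/\partial\theta_j$, $\partial A/\partial\theta_j$ and $\partial B/\partial\theta_j$ follow by the product, quotient and chain rules applied to \eqref{eq:SebCharFn_subdefinitions}. For $D$ I would use the rearranged form $D = \log d + \kappa t/2 - \log A_2$ of \eqref{eq:logB}, which produces the compact identity $\tfrac1B\,\partial B/\partial\theta_j = \tfrac1d\,\partial d/\partial\theta_j - \tfrac1{A_2}\,\partial A_2/\partial\theta_j$, with an extra $t/2$ only for $\theta_j=\kappa$. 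Collecting the contributions of each summand of $\log\phi$ then yields $h_1,\dots,h_5$: $h_1=-A$ since only $-v_0A$ carries $v_0$; $h_2$ since only $-t\kappa\bar v\rho iu/\sigma$ and $(2\kappa\bar v/\sigma^2)D$ carry $\bar v$; and for $h_3,h_4,h_5$ the product rule on $(2\kappa\bar v/\sigma^2)D$ supplies the $(2\bar v/\sigma^2)D$ and $-(4\kappa\bar v/\sigma^3)D$ prefactor terms, while the $\partial D/\partial\theta_j$ pieces, rewritten via the identity above, give the $\tfrac{2\kappa\bar v}{\sigma^2 d}\bigl(\partial d/\partial\theta_j - \tfrac{d}{A_2}\partial A_2/\partial\theta_j\bigr)$ shape for $\theta_j\in\{\rho,\sigma\}$ and the $\tfrac{2\kappa\bar v}{\sigma^2 B}\partial B/\partial\kappa$ form for $\theta_j=\kappa$.

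The one step that is not pure bookkeeping is the appearance of $\tfrac{v_0}{\sigma iu}\,\partial A/\partial\rho$ in $h_4$ where one naively expects $-v_0\,\partial A/\partial\kappa$. The point is that $A$ and $A_2$ depend on $\kappa$ and $\rho$ only through $\xi$ and $d$, and that $\partial\xi/\partial\rho = -\sigma iu\,\partial\xi/\partial\kappa$ while $\partial d/\partial\rho = -\sigma iu\,\partial d/\partial\kappa$ (the latter because $d^2$ involves $\kappa,\rho$ only via $\xi^2$). By the chain rule this proportionality passes to any function of $\xi$ and $d$, so $\partial A/\partial\kappa = -\tfrac1{\sigma iu}\,\partial A/\partial\rho$ and $\partial A_2/\partial\kappa = -\tfrac1{\sigma iu}\,\partial A_2/\partial\rho$; substituting gives the stated form and, in passing, the interdependency among the $h_j$ exploited in the efficient implementation. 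What then remains is to match each collected expression against \eqref{eq:dphi_dv0}--\eqref{eq:dphi_dsig}. I expect the main difficulty to be organisational rather than conceptual: keeping the numerous product-rule terms for the $(2\kappa\bar v/\sigma^2)D$ contribution in $h_4$ and $h_5$ straight, recognising which groupings reproduce the published closed forms, and supplying the routine but genuinely needed dominated-convergence argument that legitimises differentiating under the integral sign.
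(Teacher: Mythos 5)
Your proposal is correct and follows essentially the same route as the paper: differentiate under the integral sign, reduce to $\bm\nabla\phi=\phi\,\bm\nabla\log\phi$ via the exponential representation \eqref{eq:CuiCharFn}, compute the intermediate derivatives of $\xi,d,A_1,A_2,A,B,D$, and exploit the proportionality $\partial/\partial\kappa=-\tfrac{1}{\sigma iu}\,\partial/\partial\rho$ (plus the extra $tB/2$ term) to obtain $h_4$ from the $\rho$-derivatives, exactly as in Eqs.~\eqref{eq:fragmentsRho}--\eqref{eq:fragmentsVov}. Your explicit justification of the interchange of $\bm\nabla$ and the integral, and your conceptual explanation of why the $\kappa$--$\rho$ interdependency propagates to any function of $\xi$ and $d$, are welcome additions but do not change the argument.
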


\begin{proof}
Eq.~\eqref{eq:dC_dtheta} is a direct result from the vanilla option pricing function \eqref{eq:HestonCall}.
Then the problem reduces to the derivation of the gradient of the characteristic function $\phi(\bm{\theta}; u, T)$.
Starting from Eq.~\eqref{eq:SebCharFn} and following the chain rule, one can get $\bm{\nabla} \phi(\bm{\theta}; u, T)$ as discussed below.

Since $v_0$ and $\bar{v}$ are only in the exponent and are not involved with the definition of $A$ or $B$, we directly obtain
\begin{gather}
\frac{\partial \phi(\bm{\theta}; u, T)}{\partial v_0} = -A\phi(\bm{\theta}; u, T),\\
\frac{\partial \phi(\bm{\theta}; u, T)}{\partial \bar{v}} = \frac{2\kappa \log{B}\phi(\bm{\theta}; u, T)}{\sigma^2}.\label{proof:dphi_dvbar}
\end{gather}
Next we derive the partial derivative with respect to $\rho$, since it provides some terms that can be reused for the rest.
We have
\begin{subequations}\label{eq:deriv_rho}
\begin{align}
\frac{\partial \phi(\bm{\theta}; u, T)}{\partial \rho} &= \phi(\bm{\theta}; u, T)\left(-\frac{t\kappa \bar{v}iu}{\sigma}-v_0\frac{\partial A}{\partial \rho}\right)
+\phi(\bm{\theta}; u, T)\frac{2\kappa\bar{v}}{\sigma^2}\frac{1}{B}\frac{\partial B}{\partial \rho}\\
&= \phi(\bm{\theta}; u, T) \left[ -\frac{t\kappa \bar{v}iu}{\sigma}-v_0\frac{\partial A}{\partial \rho} +  \frac{2\kappa\bar{v}}{\sigma^2d}
\left( \frac{\partial d}{\partial \rho} - \frac{d}{A_2} \frac{\partial A_2}{\partial \rho} \right)\right] \\
&= \phi(\bm{\theta}; u, T) \left[-v_0\frac{\partial A}{\partial \rho} + \frac{2\kappa \bar{v}}{\sigma^2 d}\left(\frac{\partial d}{\partial \rho} - \frac{d}{A_2}\frac{\partial A_2}{\partial
\rho}\right) - \frac{t\kappa \bar{v} i u}{\sigma}\right],
\end{align}
\end{subequations}
where
\begin{subequations}\label{eq:fragmentsRho}
\begin{align}
\frac{\partial d}{\partial \rho} &= -\frac{\xi \sigma i u}{d},\\
\frac{\partial A_2}{\partial \rho} &= -\frac{\sigma i u (2+t\xi)}{2d}\left(\xi \cosh \frac{d t}{2} + d \sinh \frac{d t}{2}\right),\\
\frac{\partial B}{\partial \rho}&= e^{\kappa t/2}\left(\frac{1}{A_2}\frac{\partial d}{\partial \rho} - \frac{d}{A_2^2}\frac{\partial A_2}{\partial \rho}\right),\\
\frac{\partial A_1}{\partial \rho} &= - \frac{i u(u^2+iu)t\xi \sigma }{2d}\cosh \frac{dt}{2} ,\\
\frac{\partial A}{\partial \rho} &= \frac{1}{A_2} \frac{\partial A_1}{\partial \rho} - \frac{A}{A_2} \frac{\partial A_2}{\partial \rho}.
\end{align}
\end{subequations}
By merging and rearranging terms, we find that
\begin{subequations}\label{eq:fragmentsKap}
\begin{align}
\frac{\partial A}{\partial \kappa}&= \frac{i}{cu} \frac{\partial A}{\partial \rho},\\
\frac{\partial B}{\partial \kappa}&=  \frac{i}{\sigma u} \frac{\partial B}{\partial \rho} + \frac{tB}{2},
\end{align}
\end{subequations}
which are inserted into
\begin{equation}\label{proof:dphi_dkappa}
\frac{\partial \phi(\bm{\theta}; u, T)}{\partial \kappa} =  \phi(\bm{\theta}; u, T) \left( -v_0 \frac{\partial A}{\partial \kappa} +  \frac{2\bar{v}}{\sigma^2}\log B+
 \frac{2\kappa \bar{v}}{\sigma^2B} \frac{\partial B}{\partial
\kappa} - \frac{t \bar{v} \rho i u}{\sigma}\right)
\end{equation}
to reach the expression \eqref{eq:dphi_dkap}.
Similarly, Eq.~\eqref{eq:dphi_dsig} can be obtained by applying the chain rule to Eq.~\eqref{eq:SebCharFn}, and the intermediate terms for $\partial  \phi(\bm{\theta}; u, T)/\partial \sigma$ can be written in terms of those for $\partial  \phi(\bm{\theta}; u, T)/\partial \rho$, that is
\begin{subequations}\label{eq:fragmentsVov}
\begin{align}
\frac{\partial d}{\partial \sigma} &= \left(\frac{\rho}{\sigma} - \frac{1}{\xi}\right) \frac{\partial d}{\partial \rho} + \frac{\sigma u^2}{d},\\
\frac{\partial A_1}{\partial \sigma} &= \frac{(u^2+iu) t}{2} \frac{\partial d}{\partial \sigma}\cosh\frac{dt}{2},\\
\frac{\partial A_2}{\partial \sigma} &= \frac{\rho}{\sigma}\frac{\partial A_2}{\partial \rho} - \frac{2+t\xi}{iut\xi}\frac{\partial A_1}{\partial \rho} + \frac{\sigma tA_1}{2},\\
\frac{\partial A}{\partial \sigma} &= \frac{1}{A_2} \frac{\partial A_1}{\partial \sigma} - \frac{A}{A_2} \frac{\partial A_2}{\partial \sigma}.
\end{align}
\end{subequations}
In the end, we replace $\log B$ appearing in Eqs.~\eqref{proof:dphi_dvbar} and \eqref{proof:dphi_dkappa} with $D$, defined in Eq.~\eqref{eq:D}, to ensure the numerical continuity of the implementation.
\end{proof}

Next we discuss the computation of the integrands in Eq.~\eqref{eq:dC_dtheta} and their convergence.

\subsubsection{Efficient calculation and convergence of the integrands}
All integrands have the form $\mathrm{Re} \left(\phi(\bm{\theta}; u, t)h_j(u)K^{-iu}/(iu)\right)$ and $h_j(u)$ is a product of elementary functions depending on which
parameter is under consideration.
It has been pointed out in the original paper by Heston \cite{Heston} that the term $\mathrm{Re} \left(\phi(\bm{\theta}; u, t)K^{-iu} /(iu)\right)$ is a smooth function that decays rapidly and presents no difficulties; its product with elementary functions decreases fast too.
A visual example is shown in Fig.~\ref{fig:convergeIntegrand}, with parameters given in Table~\ref{tab:parameter_exp1}. In our time units, $t=1$ is a trading year made of 252 days.
\begin{figure}[H]
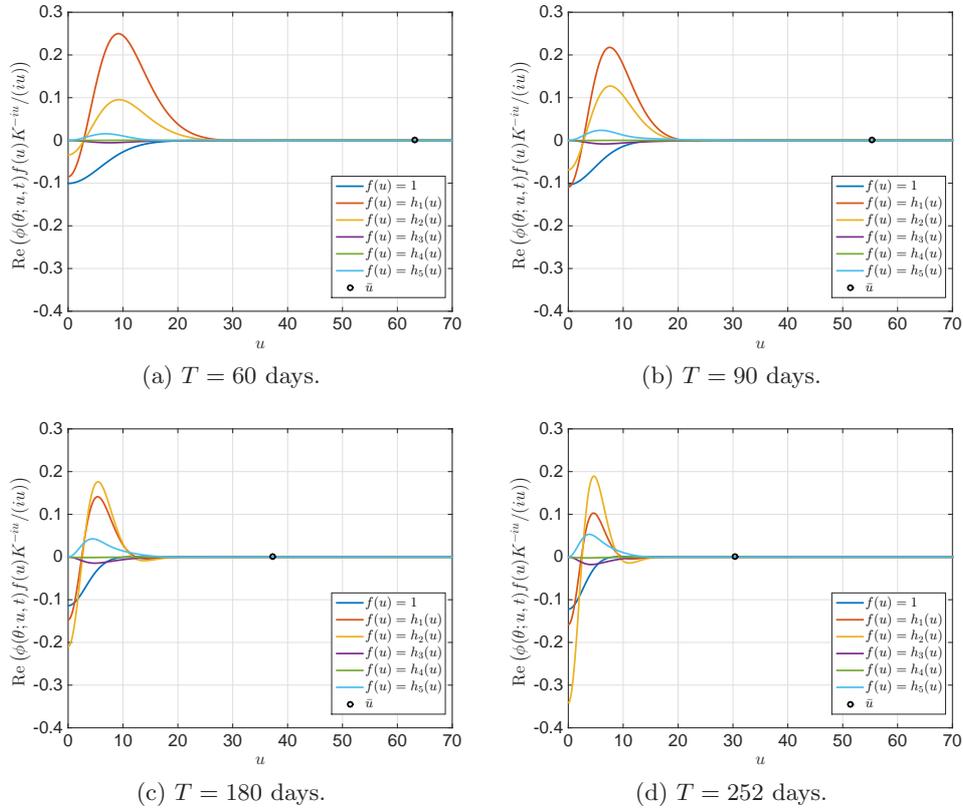

    \subfloat[$T=60$ days.]{\includegraphics[width=0.48\textwidth]{Integral60}\label{fig:conv60}}\hfill
   \subfloat[$T=90$ days.]{\includegraphics[width=0.48\textwidth]{Integral90}\label{fig:conv90}}\\
     \subfloat[$T=180$ days.]{\includegraphics[width=0.48\textwidth]{Integral180}\label{fig:conv180}}\hfill
    \subfloat[$T=252$ days.]{\includegraphics[width=0.48\textwidth]{Integral252}\label{fig:conv252}}
\caption{Convergence of the integrands:  $\mathrm{Re} \left(\phi(\bm{\theta}; u, t)K^{-iu}/(iu)\right)$ in the Heston pricing formula for $C(\bm{\theta}; K, T)$ (dark blue) and $\mathrm{Re} \left(\phi(\bm{\theta}; u, t)\bm{h}(u)K^{-iu}/(iu)\right)$ in the components of its gradient $\partial C/ \partial \bm{\theta}$ (other colors); $h_j(u), j = 1, \ldots, 5$ are respectively relevant for $\partial C/ \partial \theta_j$. The black circle indicates the value $\bar{u}$ where all integrands are below $10^{-8}$.\label{fig:convergeIntegrand}}
\end{figure}

Denote as $\bar{u}$ the value of $u$ for which all integrands are not larger than $10^{-8}$.  For our testing parameter set, we observe in Figs.~\ref{fig:convergeIntegrand} and \ref{fig:convpoint} that $\bar{u}$ decreases when $T$ increases. This is due to the fact that the more spread-out a function is, the more localised its Fourier transform is (see the uncertainty principle in physics): as $T$ increases, the probability density of $S_T$ stretches out, while its Fourier transform $\phi(\bm{\theta}; u, T)$ squeezes. More specifically, if $X$ and $U$ are random variables whose probability density functions are, apart of a constant, Fourier pairs of each other, the product of their variances is a constant, i.e., $\mathrm{Var}(X)\mathrm{Var}(U) \geq 1$.
Based on this observation, one can adjust the truncation according to the maturity of the option and hence do fewer integrand evaluations for options with longer maturities.
\begin{figure}[H]
\centering
    \includegraphics[width=0.48\textwidth]{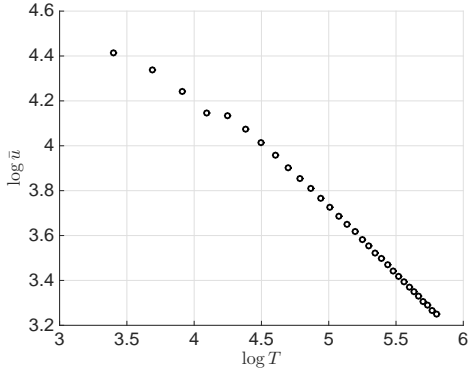}
\caption{As the maturity $T$ increases, the value $\bar{u}$ for which all integrands evaluate to $10^{-8}$ or less decreases.\label{fig:convpoint}}
\end{figure}

In order to obtain the integrands in Eq.~\eqref{eq:dC_dtheta}, one only needs to compute $\phi(\bm{\theta}; u, t)$ and $\bm{h}(u)$.
After rearranging and merging terms, we find that calculating $\bm{h}(u)$ can be boiled down to obtaining the intermediate terms \eqref{eq:fragmentsRho}, \eqref{eq:fragmentsKap} and \eqref{eq:fragmentsVov}.
It is a favorable result that the components of $\bm{h}(u)$ share these common terms because then the gradient $\bm{\nabla} C(\bm{\theta}; K,T)$ can be obtained by vectorizing the quadrature for all the integrands as illustrated in Algorithm \ref{algo:vectorizedIntegral}.
\begin{algorithm}[h]
\caption{Vectorised integration in the Heston gradient.}\label{algo:vectorizedIntegral}
Specify $N$ grid nodes $\left\{u_k\right\}_{k = 1} ^N$ and $N$ corresponding weights $\left\{w_k\right\}_{k = 1} ^N$.\\
\For{$k=1, 2, \dots, N$} {
            Compute $\bm{h}(u_k)$.\\
            }
\For{$j= 1, 2, \dots, 5$} {
            Compute $\int_0^\infty\frac{K^{-iu}}{iu}\phi(\bm{\theta}; u_k, t) h_j(u) \mathrm{d}u \approx \sum_{k=1}^{N}\frac{K^{-iu_k}}{iu_k}\phi(\bm{\theta}; u_k, t)h_j(u_k)w_k$.
            }
\end{algorithm}
Due to the interdependence among components of $\bm{h}(u)$, this scheme is faster than computing and integrating each component $h_j(u)$ individually.
Next, we discuss the choice of the numerical integration method and of the key parameters $N$, $u_k$ and $w_k$, but we point out that this vectorised quadrature is compatible with any numerical integration method.

\subsubsection{Integration scheme}
The computation of the integrals in the pricing function \eqref{eq:HestonCall} and the gradient function \eqref{eq:dC_dtheta} dominates the cost of calibration.
Thus, we discuss the proper choice of the numerical integration scheme.
Specifically, we compare the trapezoidal rule (TR) and the Gauss-Legendre rule (GL).
In Figs.~\ref{fig:node100_pv} and \ref{fig:node100_jac}, we plot the error of the integral evaluation respectively in the pricing formula and its gradient.
\begin{figure}[h]
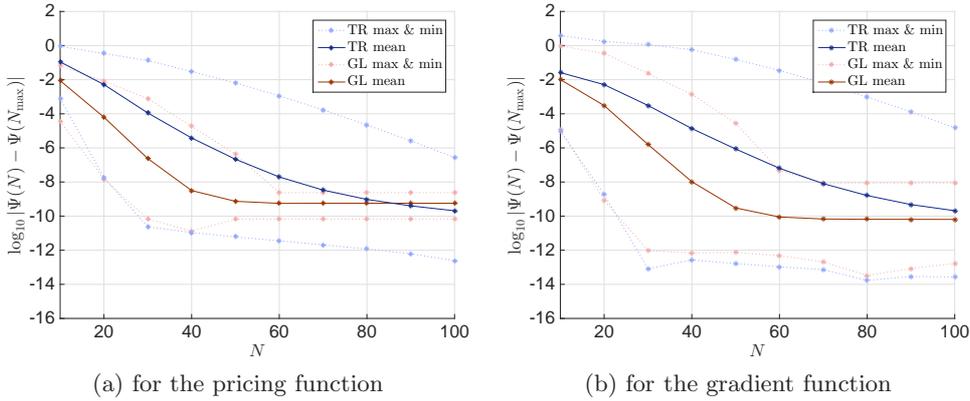
\label{fig:compareRule}
\centering
    \subfloat[for the pricing function]{\includegraphics[width=.48\textwidth]{node_pv100}\label{fig:node100_pv}}
 \hfill
    \subfloat[for the gradient function]{\includegraphics[width=.48\textwidth]{node_jac100}\label{fig:node100_jac}}
    \caption{Comparison between TR (full blue for mean and dotted purple for maximum and minimum) and GL (full red for mean and dotted pink for maximum and minimum) for the error of the integral evaluation under the Heston model.}
\end{figure}
The horizontal axis is the number of quadrature nodes $N$ and the vertical axis is the $\log_{10}$ scale of the absolute error of integration, which is defined as
\begin{equation}
\varepsilon_{\mathrm{integration}} := | \Phi(N) - \Phi(N_{\mathrm{max}})|,
\end{equation}
where $\Phi(N)$ is the value of the integration with $N$ nodes, $N$ is selected equidistantly in the range $[10, 100]$, $N_{\mathrm{max}}$
should be $\infty$ and is chosen as $1000$ in our case.
For the plots we use $40$ options with different strikes and maturities.
More details on these options are given in Section~\ref{sec:results}.

The error converges faster for GL than TR and has always a smaller variation when more options are involved.
In order to achieve an average accuracy of $10^{-8}$, GL requires $40$ nodes and TR requires $70$.
In order for the integrations for all the options to achieve an accuracy at $10^{-8}$, GL requires $60$ nodes and TR requires much more than $100$.

Besides the fast convergence of the integral error, GL is advantageous in its selection of nodes.
GL rescales the domain of integration to $[-1, +1]$, selects nodes that are symmetric around the origin, and assigns the same weight to each symmetric pair of nodes.
Thus, a further reduction in computation can be achieved by making use of the common terms of a node and its opposite.
Based on these benefits, we choose the GL integration scheme with about $60$ nodes to calibrate the Heston model.

\subsubsection{Comparison with numerical gradient}
Previous calibration methods approximate the gradient by a finite difference scheme. A central difference scheme is the approximation
\begin{equation}
\bm{\nabla}C(\bm{\theta}; K, T) \approx \frac{C(\bm{\theta}+\bm{\epsilon}; K, T)-C(\bm{\theta}-\bm{\epsilon};K, T)}{2\epsilon},
\end{equation}
where $\bm{\epsilon} := \epsilon\bm{e}$ and $\epsilon$ is small. Different values of the increment $\epsilon$ could be chosen for each component $\theta_j$; for simplicity we have taken it constant.
The size of the difference, $\epsilon$, has a non-trivial effect on the approximation.
An excessively small value of $\epsilon$ is not able to reflect the overall function behavior at the point and may lead to a wrong moving direction.
Moreover the numerical gradient naturally has an error and one cannot expect to find a solution with a better accuracy than that of the gradient.
In most cases, the iteration stagnates when the error of the objective function is roughly the same size as the error of the gradient.

Besides the instability caused by an inappropriate choice of $\epsilon$, a numerical gradient has a higher computational cost than an analytical gradient.
Recall that the evaluation of one option price $C(\bm{\theta}; K, T)$ requires the evaluation of two integrals as in Eq.~\eqref{eq:HestonCall}.
Let $n$ be the number of options to be calibrated.
At each iteration, one needs to compute $20n$ integrals if using the finite difference scheme while only $2n$ integrals if using the analytical form with the vectorised integration scheme.
To give a more intuitive comparison between the two methods, we perform a preliminary experiment with $\epsilon = 10^{-4}$ and $n = 40$ using the MATLAB function \textit{quadv} with an adaptive Simpson rule for the numerical integration.
In Table~\ref{tab:experimentOnFiniteDifference}, we report the CPU time as an average of $500$ runs and the number of calls of the integral function for each method.
In order to give a relative sense of speed that is independent of the machine, the CPU time for analytical gradient is scaled to unity, and that for numerical gradient results about $16$ times longer.
\begin{table}[H]
\caption{A comparison between numerical and analytical gradients for $n=40$ options.}\label{tab:experimentOnFiniteDifference}
\begin{center}\footnotesize
\renewcommand{\arraystretch}{1.3}
\begin{tabular}{l |rr}\hline\hline
\multicolumn{1}{c|}{Computational cost}       & \multicolumn{1}{c}{Numerical gradient} & \multicolumn{1}{c}{Analytical gradient} \\ \hline
 CPU time (arbitrary units)               & 15.8                      & 1.0 \\
 Number of integral evaluations      & 800                         & 80 \\ \hline\hline
\end{tabular}
\end{center}
\end{table}
Considering the $94\%$ of saving in computational time and the exempt from deciding $\epsilon$, we propose to use the analytical Heston gradient with vectorised quadrature in a gradient-based optimisation algorithm to calibrate the model.

\section{Calibration using the Levenberg-Marquardt method}\label{sec:calibration}
In this section, we present the algorithm for a complete and fast calibration of the Heston model using the LM method \cite{more1978levenberg}.

The LM method is a typical tool to solve a nonlinear least squares problem like Eq.~\eqref{eq:LSform}.
The search step is given by
\begin{equation}\label{eq:LMnormalequation}
\Delta \bm{\theta} = (\bm{J}\bm{J}^\intercal  + \mu\bm{I})^{-1}\bm{\nabla}f,
\end{equation}
where $\bm{I}$ is the identity matrix and $\mu$ is a damping factor. By adaptively adjusting $\mu$, the method changes between the steepest descent method and the Gauss-Newton method: when the iterate is far from the optimum, $\mu$ is given a large value so that the Hessian matrix is dominated by the scaled identity matrix
\begin{equation} \label{eq:HessianSD}
\bm{\nabla}\bm{\nabla}^\intercal f \approx \mu\bm{I};
\end{equation}
when the iterate is close to the optimum, $\mu$ is assigned a small value so that the Hessian matrix is dominated by the Gauss-Newton approximation
\begin{equation} \label{eq:HessianGN}
\bm{\nabla}\bm{\nabla}^\intercal f \approx \bm{J}\bm{J}^\intercal,
\end{equation}
which omits the second term $\sum^n_{i=1} r_i \bm{H}(r_i)$ in Eq.~\eqref{HessianLS}.
The approximation \eqref{eq:HessianGN} is reliable when either $r_i$ or $\bm{H}(r_i)$ is small.
The former happens when the problem is a so-called \textit{small residual problem} and the latter happens when $f$ is nearly linear.
The viewpoint is that the model should yield small residuals around the optimum because otherwise it is an inappropriate model.
The Heston model has been known to be able to explain the smile and skew of the volatility surface.
Therefore, we conjecture it to be a small residual problem and adopt the approximation of the Hessian in Eq.~\eqref{eq:HessianGN} as converging to the optimum.
There are various implementations of the LM method, such as \texttt{MINPACK} \cite{cminpack}, \texttt{LEVMAR} \cite{lourakis2005brief}, \texttt{sparseLM} \cite{lourakis10} etc.
We adopt the {\tt{LEVMAR}} package which is a robust and stable implementation in C/C++ distributed under GNU.
Although its documentation does not report a use in computational finance, \texttt{LEVMAR} has been integrated into many open source and commercial products in other applications such as astrometric calibration and image processing. See Algorithm \ref{algo:levmar}.
\begin{algorithm}[h]
\caption{Levenberg-Marquardt algorithm to calibrate the Heston model.}\label{algo:levmar}
Given the initial guess $\bm{\theta}_0$, compute $\|\bm{r}(\bm{\theta}_0)\|$ and $\bm{J}_0$.\label{algo:feva1}\\
Choose the initial damping factor as $\mu_0 = \tau \max\left\{\mathrm{diag}(\bm{J}_0)\right\}$ and  $\nu_0 = 2$.\\
\For{$k=0, 1, 2, \ldots$} {

Solve the normal equations \eqref{eq:LMnormalequation} for $\Delta \bm{\theta}_k$.\label{algo:levmarNE}\\
Compute $\bm{\theta}_{k+1} = \bm{\theta}_k + \Delta \bm{\theta}_k$ and $\|\bm{r}(\bm{\theta}_{k+1})\|$.\label{algo:feva2}\\
Compute $\delta_L= \Delta{\bm{\theta}_k}^\intercal(\mu\Delta\bm{\theta}_k + {\bm{J}_k}\bm{r}(\bm{\theta}_k)) $ and $\delta_F = \|\bm{r}(\bm{\theta}_k)\| - \|\bm{r}(\bm{\theta}_{k+1})\|$.\\
\eIf{$\delta_L>0$ and $\delta_F>0$}{
	Accept the step: compute $\bm{J}_{k+1}$, $\mu_{k+1} = \mu_k$, $\nu_{k+1} = \nu_k$.\label{algo:jaceva2}
}{
	Recalculate the step: set $\mu_{k} = \mu_k\nu_k$, $\nu_{k} = 2\nu_k$ and repeat from line \ref{algo:levmarNE}.
}
\If{the stopping criterion \eqref{eq:LMstop} is met}{
Break.
}
        }
\end{algorithm}

In lines \ref{algo:feva1} and \ref{algo:feva2} of Algorithm \ref{algo:levmar}, the option pricing function must be evaluated.
In lines \ref{algo:feva1} and \ref{algo:jaceva2}, the gradient function needs to be evaluated.
In line \ref{algo:levmarNE}, a $5\times 5$ linear system needs to be solved;
in {\tt{LEVMAR}} this is done by an LDLT factorization with the pivoting strategy of Bunch and Kaufman \cite{bunch1976decomposition} using the {\tt{LAPACK}} \cite{anderson1999}
routine.

The stopping criterion for the LM algorithm is when one of the following is satisfied:
\begin{subequations}\label{eq:LMstop}
\begin{align}
\|\bm{r}(\bm{\theta}_k)\| &\leq \varepsilon_1,\label{eq:stop1}\\
\|{\bm{J}_k}\bm{e}\|_\infty&\leq \varepsilon_2,\label{eq:stop2}\\
\frac{\|\Delta\bm{\theta}_k\|}{\|\bm{\theta}_k\|}&\leq \varepsilon_3\label{eq:stop3},
\end{align}
\end{subequations}
where $\varepsilon_1, \varepsilon_2$ and $\varepsilon_3$ are tolerance levels.
The first condition \eqref{eq:stop1} indicates that the iteration is stopped by a desired value of the objective function \eqref{eq:LSform}--\eqref{eq:defOfLS}.
The second condition \eqref{eq:stop2} indicates that the iteration is stopped by a small gradient.
The third condition \eqref{eq:stop3} indicates that the iteration is stopped by a stagnating update.

\section{Numerical results}\label{sec:results}
In this section, we present our experimental results for the calibration of the Heston model.
We first describe the data and then report the performance of our calibration method in comparison with the fastest previous method.
We examine the Hessian matrix at the optimal solution which reveals the reason of the multiple optima observed in previous research.
In the end, we test on three parameterisations that are typical for certain options. The result justifies the computational efficiency and robustness of our method for practical problems.

\subsection{Data}
In order to check whether the optimal parameter set found by the algorithm is the global optimum, we first presume a parameter set $\bm{\theta}^*$ specified in Table~\ref{tab:parameter_exp1}, and then use it to generate a volatility surface that is typically characterised by these options: the $\Delta_{10}$ call and put options, $\Delta_{25}$ call and put options, and $\Delta_{50}$ (i.e., ATM) call options with maturity from $30$ to $360$ days.
Here $\Delta:=\partial C(\bm{\theta}; K, T) /\partial S$ is the BS greek, i.e., the sensitivity of the option price with respect to the movement of its underlying spot.
In Table~\ref{tab:volsurface}, we give the BS implied volatilities of $40$ options that are generated by $\bm{\theta}^*$.
We denote call and put options using superscripts, respectively as $\Delta^{\mathrm{call}}$ and $\Delta^{\mathrm{put}}$.
\begin{table}[h]
\caption{Volatility surface for calibration.}\label{tab:volsurface}
\begin{center}\footnotesize
\renewcommand{\arraystretch}{1.3}
\begin{tabular}{r |rrrrr}\hline\hline
Maturity in days & $\Delta_{10}^{\mathrm{put}}$ & $\Delta_{25}^{\mathrm{put}}$ & $\Delta_{50}^{\mathrm{call}}$  & $\Delta_{25}^{\mathrm{call}}$ & $\Delta_{10}^{\mathrm{call}}$\\
\hline
 30 & 2.5096    & 1.4359   & 0.2808 & 0.2540 & 0.2369 \\
 60 & 2.4351    & 1.3216   & 0.2847 & 0.2606 & 0.2417 \\
 90 & 2.3823    & 1.2955   & 0.2878 & 0.2660 & 0.2489 \\
 120 & 2.3383   & 1.2677   & 0.2904 & 0.2699 & 0.2548 \\
 150 & 2.2996   & 1.2407   & 0.2925 & 0.2745 & 0.2598 \\
 180 & 2.2619   & 1.2166   & 0.2943 & 0.2777 & 0.2641 \\
 252 & 2.1767   & 1.1671   & 0.2975 & 0.2837 & 0.2722 \\
 360 & 2.0618   & 1.1136   & 0.3007 & 0.2897 & 0.2803 \\ \hline\hline
\end{tabular}
\end{center}
\end{table}
The target is thus to find a parameter set $\bm{\theta}^\dag$ that can replicate the volatility surface in Table~\ref{tab:volsurface}.
If $\bm{\theta}^\dag$ is far from $\bm{\theta}^*$ or in other words, depends on the initial guess $\bm{\theta}_0$, then one concludes that local optimal parameter sets exist.
Otherwise the problem presents only a global optimum.

We validated our method using different optimal parameters and initial guesses in a reasonable range given in Table~\ref{tab:parameter_initialpoint}. The procedure is described in Algorithm \ref{algo:test}.
\begin{table}[h]
\caption{Reasonable ranges to randomly generate Heston model parameters and the average absolute distance between the initial guess $\bm{\theta}_0$ and the optimum $\bm{\theta}^*$.}
\begin{center}\footnotesize
\renewcommand{\arraystretch}{1.3}
\begin{tabular}{cr|cr}\hline\hline
     \multicolumn{2}{c} { Range for model parameters} &  \multicolumn{2}{c}{ Absolute deviation from $\bm{\theta}^*$ } \\ \hline
     $\kappa$               & $(0.50, 5.00)$     &$|\kappa_0- \kappa^*|$  & 1.5097\\
     $\bar{v}$               & $(0.05, 0.95)$     & $|\bar{v}_0-\bar{v}^*|$   & 0.2889\\
     $\sigma$               & $(0.05, 0.95)$     &$|\sigma_0-\sigma^*|$   & 0.2875\\
     $\rho$                   & $(-0.90, -0.10)$   &$|\rho_0-\rho^*|$            & 0.2557\\
     $v_0$                   & $(0.05, 0.95)$      & $|{(v_0)}_0-v_0^*|$       & 0.3063\\ \hline\hline
\end{tabular}\label{tab:parameter_initialpoint}
\end{center}
\end{table}

\begin{algorithm}[H]
\caption{Validation procedure.}\label{algo:test}
\For{$i=1, 2, \ldots, 100$} {
Generate a vector of optimal parameters $\bm{\theta}^*_i$, each component of which is an independent uniformly distributed random number in the interval specified in Table~\ref{tab:parameter_initialpoint}.\\
    \For{$j=1, 2, \ldots, 100$} {
Generate an initial guess $\bm{\theta}_{0j}$, each component of which is an independent uniformly distributed random number in the interval specified in Table~\ref{tab:parameter_initialpoint}.\\
    Validate Algorithm \ref{algo:levmar} using the initial guess $\bm{\theta}_{0j}$ to find $\bm{\theta}^*_i$.
    }
}
\end{algorithm}
Following this procedure, we validated Algorithm \ref{algo:levmar} with 10\,000 test cases. An average of the distances between the initial guesses $\bm{\theta}_0$ and the optima $\bm{\theta}^*$ is given in Table~\ref{tab:parameter_initialpoint}. The results of the tests are discussed in the next section.

\subsection{Performance}
The computations were performed on a MacBook Pro with a 2.6 GHz Intel Core i5 processor, 8 GB of RAM and OS X Yosemite version 10.10.5.
The pricing and gradient functions for the Heston model were coded in C++ using Xcode version 7.3.1.
We use {\tt{LEVMAR}} version 2.6 \cite{lourakis2005brief} as the LM solver setting the tolerances in Eqs.~\eqref{eq:LMstop} to $\varepsilon_1 = \varepsilon_2 = \varepsilon_3 = 10^{-10}$.
However, in our experiments the LM iteration was always stopped by meeting the condition on the objective function \eqref{eq:stop1}.
We use GL integration with $N = 64$ nodes and for simplicity we truncate the upper limit of the integration in Eq.~\eqref{eq:dC_dtheta} at $\bar{u} = 200$ which shall be enough for pricing and calibrating in all cases.
The code is provided in the supplementary material.

The proposed method succeeds in finding the presumed parameter set in 
9\,843 cases out of 10\,000 without any constraints on the search space and in 
9\,856 cases restraining the search to the intervals specified in Table~\ref{tab:parameter_initialpoint}. The average CPU time for the whole calibration process is less than 0.3 seconds.
See Table~\ref{tab:infooptimalavg} for detailed information on the whole validation set. In Table~\ref{tab:infooptimal} and in the rest of this section we specify the information for a representative example  with the optimal parameter set $\bm{\theta}^*$ specified in Table~\ref{tab:parameter_exp1} and the initial guess $\bm{\theta}_0 = [1.20, 0.20, 0.30, -0.60, 0.20]^\intercal$.

 \begin{table}[h]
 \caption{Information about the optimisation: average over 10\,000 testing cases.}
 \begin{center}\footnotesize
 \renewcommand{\arraystretch}{1.3}
 \begin{tabular}{ lr | lr  | lr}\hline\hline
 \multicolumn{2}{c}{ Absolute deviation from $\bm{\theta}^*$}  & \multicolumn{2}{c}{Error measure}    & \multicolumn{2}{c}{Computational cost}\\ \hline
 $|\kappa^\dag - \kappa^*|$       & $1.54\times 10^{-3}$       & $\|\bm{r}_0\|$       & $1.39\times 10^{-1}$    & CPU time (seconds) & $0.29$ \\
 $|\bar{v}^\dag-\bar{v}^*|$          & $2.40\times 10^{-5}$       & $\|\bm{r}^\dag\|$  & $2.94\times 10^{-11}$  & LM iterations & $12.82$ \\
 $|\sigma^\dag-\sigma^*|$         & $3.79\times 10^{-3}$       & $\|{\bm{J}^\dag}\bm{e}\|_\infty$ & $1.47\times 10^{-5}$ & price evaluations & $14.57$\\
 $|\rho^\dag-\rho^*|$                  & $1.52\times 10^{-2}$       & $\|\Delta\bm{\theta}^\dag\|$   &$3.21\times 10^{-4}$ & gradient evaluations & $12.82$\\
 $|v_0^\dag-v_0^*|$                  & $6.98\times 10^{-6}$        &                             &                           & linear systems solved & $13.57$\\ \hline\hline
 \end{tabular}\label{tab:infooptimalavg}
 \end{center}
 \end{table}

 \begin{table}[h]
 \caption{Information about the optimisation of a representative example.}
 \begin{center}\footnotesize
 \renewcommand{\arraystretch}{1.3}
 \begin{tabular}{lr | lr  | lr}\hline\hline
 \multicolumn{2}{c}{ Absolute deviation from $\bm{\theta}^*$}  & \multicolumn{2}{c}{Error measure}    & \multicolumn{2}{c}{Computational cost}\\ \hline
 $|\kappa^\dag - \kappa^*|$       & $1.09\times 10^{-3}$       & $\|\bm{r}_0\|$       & $4.73\times 10^{-2}$    & CPU time (seconds) & $0.29$ \\
 $|\bar{v}^\dag-\bar{v}^*|$          & $2.18\times 10^{-6}$       & $\|\bm{r}^\dag\|$  & $1.00\times 10^{-12}$  & LM iterations & $13$ \\
 $|\sigma^\dag-\sigma^*|$         & $4.70\times 10^{-5}$       & $\|{\bm{J}^\dag}\bm{e}\|_\infty$ & $1.21\times 10^{-5}$ & price evaluations & $14$\\
 $|\rho^\dag-\rho^*|$                  & $9.89\times 10^{-6}$       & $\|\Delta\bm{\theta}^\dag\|$   &$2.50\times 10^{-4}$ & gradient evaluations & $13$\\
 $|v_0^\dag-v_0^*|$                  & $1.18\times 10^{-6}$        &                             &                           & linear systems solved & $13$\\ \hline\hline
 \end{tabular}\label{tab:infooptimal}
 \end{center}
 \end{table}


\begin{figure}[H]
\centering
\includegraphics[width=0.48\textwidth]{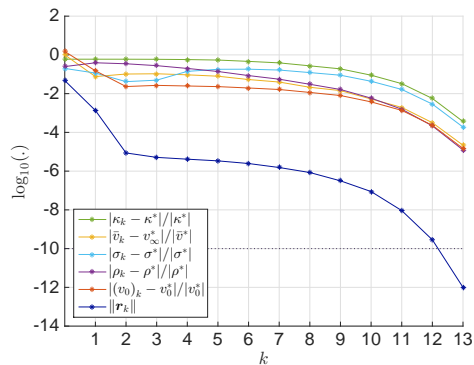}
\caption{The convergence of the LM method.\label{fig:convLM}}
\end{figure}
The convergence of the residual $\bm{r}_k$ and the relative distance of each parameter towards the optimum is plotted in Fig.~\ref{fig:convLM}.
In Figs.~\ref{fig:errsurf0} and \ref{fig:errsurfk}, we plot the pricing error on the implied volatility surface at the initial point $\bm{\theta}_0$ and the optimal point $\bm{\theta}^\dag$, respectively.
As can be seen, the pricing error decreases from $10^{-2}$ to $10^{-7}$ after $13$ steps.
\begin{figure}[H]
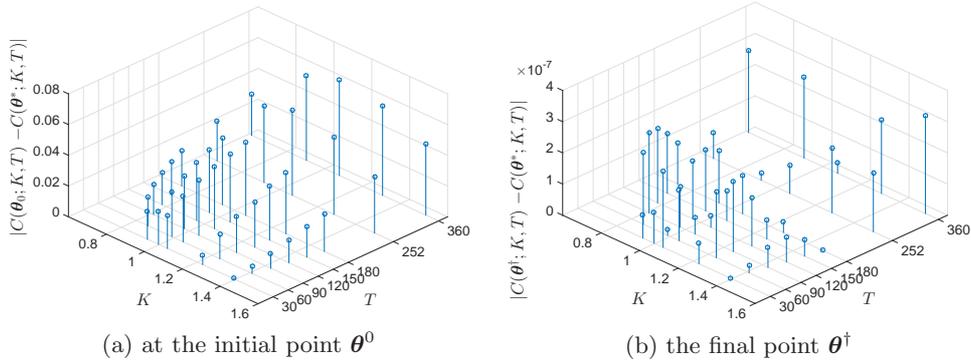

\centering
\subfloat[at the initial point $\bm{\theta}^0$]{\includegraphics[width=0.48\textwidth]{err_surface0}\label{fig:errsurf0}}\hfill
\subfloat[the final point $\bm{\theta}^\dag$]{\includegraphics[width=0.48\textwidth]{err_surfacek}\label{fig:errsurfk}}
\caption{Pricing error on the implied volatility surface.}
\end{figure}
This result contrasts the conclusion of previous research:
local optimal parameters are not intrinsically embedded in the Heston calibration problem, but rather caused by an objective function shaped as a narrow valley with a flat bottom and a premature stopping criterion.

We plot the contours for $\|\bm{r}\|$ when varying $2$ out of $5$ parameters.
Starting from $\bm{\theta}_0$, the iteration path is shown with contour plots in Fig.~\ref{fig:contour}.
The initial point $\bm{\theta}_0$ is marked with a black circle and the true solution $\bm{\theta}^*$ is marked with a black plus symbol.
The red lines with asterisks are the iteration paths of $\bm{\theta}_k,\; k = 1, \dots, 13$.
For almost all pairs, the first step is a long steepest descent step that is nearly orthogonal to the contour.
The rest are relatively cautious steps with the Gauss-Newton approximation of the Hessian.
The contour plots do not show evidence for local minima, at least not in $2$ dimensional sections.
\begin{figure}[H]
\centering
\subfloat[]{\includegraphics[width=0.48\textwidth]{cont_kapvinf}\label{fig:cont_kapvinf}}\hfill
\subfloat[]{\includegraphics[width=0.48\textwidth]{cont_rhovinf}\label{fig:cont_rhovinf}}\\
\caption{Contours of $\|\bm{r}\|$ and iteration path for $\left(\theta_i, \theta_j\right)$.}\label{fig:contour}
\end{figure}
\begin{figure}[H]
\centering
\ContinuedFloat
\subfloat[]{\includegraphics[width=0.48\textwidth]{cont_vovvinf}\label{fig:cont_vovvinf}}\hfill
\subfloat[]{\includegraphics[width=0.48\textwidth]{cont_v0kap}\label{fig:cont_v0kap}}\\
\subfloat[]{\includegraphics[width=0.48\textwidth]{cont_rhov0}\label{fig:cont_rhov0}}\hfill
\subfloat[]{\includegraphics[width=0.48\textwidth]{cont_vovv0}\label{fig:cont_vovv0}}\\
\subfloat[]{\includegraphics[width=0.48\textwidth]{cont_v0vinf}\label{fig:cont_v0vinf}}\hfill
\subfloat[]{\includegraphics[width=0.48\textwidth]{cont_vovkap}\label{fig:cont_vovkap}}\\
\caption{(cont.) Contours of $\|\bm{r}\|$ and iteration path for $\left(\theta_i, \theta_j\right)$.}
\end{figure}
\begin{figure}[H]
\centering
\ContinuedFloat
\subfloat[]{\includegraphics[width=0.48\textwidth]{cont_vovrho}\label{fig:cont_vovrho}}\hfill
\subfloat[]{\includegraphics[width=0.48\textwidth]{cont_kaprho}\label{fig:cont_kaprho}}
\caption{(cont.) Contours of $\|\bm{r}\|$ and iteration path for $\left(\theta_i, \theta_j\right)$.}
\end{figure}

The Gauss-Newton approximation of the Hessian matrix at the optimal solution is given in Table~\ref{tab:Hessian}.
\begin{table}[h]
 \caption{The Hessian matrix $\bm{\nabla\nabla}^\intercal f(\bm{\theta}^*)$.}\label{tab:Hessian}
 \begin{center}\footnotesize
 \renewcommand{\arraystretch}{1.3}
 \begin{tabular}{l | r  r  r  r  r}\hline\hline
 & $\partial\kappa$ & $\partial\bar{v}$ & $\partial\sigma$ & $\partial\rho$ & $\partial v_0$\\ \hline
$\partial\kappa$ & $5.26\times 10^{-5}$  &	 &     &      &   \\
$\partial\bar{v}$ & $9.65\times 10^{-3}$   & $2.26\times 10^{+1}$	 &   	 &      &     \\
$\partial\sigma$ & $-5.49\times 10^{-4}$ & $-7.66\times10^{-2}$  & $7.46\times 10^{-3}$	&      &       \\
$\partial\rho$     & $1.61\times 10^{-4}$	&  $2.00\times 10^{-2}$   &	$-2.34\times 10^{-3}$	& $7.56\times 10^{-4}$ & \\
$\partial v_0$     & $5.28\times 10^{-3}$	&  $1.18\times 10^{+1}$       &	$-3.53\times 10^{-2}$	& $8.40\times 10^{-3}$	 & $9.69\times 10^{-1}$\\
\hline\hline
\end{tabular}
 \end{center}
\end{table}

The Hessian matrix is ill-conditioned with a condition number of $3.978 \times 10^{6}$.
The elements $\partial^2 f(\bm{\theta}^*)/\partial \kappa^2$ and $\partial^2 f(\bm{\theta}^*)/\partial \rho^2$ are of a much smaller order than the others.
This suggests that the objective function, when around the optimum, is \textit{less sensitive} to changes along $\kappa$ and $\rho$.
In other words, the objective function is more stretched along these two axes as can be verified looking at the contours, for example in
Figs.~\ref{fig:cont_kapvinf} and \ref{fig:cont_rhovinf}.
The ratio between $\partial^2 f(\bm{\theta}^*)/\partial \kappa^2$ and $\partial^2 f(\bm{\theta}^*)/\partial \bar{v}^2$ is of order $10^{-6}$, which indicates a great disparity in sensitivity:
changing $1$ unit of $\bar{v}$ is comparable to changing $10^{6}$ units of $\kappa$.
On the other hand, this explains the so-called local minima reported in previous research.
When one starts from a different initial point and stops the iteration with a high tolerance, it is possible that the iterate lands somewhere in the region where $\kappa$ and $\rho$ are very different.
There are two possible approaches that one can seek to deal with this: the first is to scale the parameters to a similar order and search on a better-scaled objective function; the second is to decrease the tolerance level for the optimisation process, meaning to approach the very bottom of this objective function.

In Table~\ref{tab:solverCompare}, we present the performance of the LM method with analytical gradient (LMA), the LM method with numerical gradient (LMN), and a feasibility perturbed sequential quadratic programming method (FPSQP) \cite{gerlich2012parameter} adopted in UniCredit bank. As the concrete implementation of FPSQP is owned by the bank, we only extract their test results.
The computational cost can be compared through the number of evaluations of the pricing function \eqref{eq:HestonCall} per iteration, expressed as a multiple of the number $n$ of options to be calibrated.
 \begin{table}[h]
 \caption{Performance comparison between solvers.}
 \begin{center}\footnotesize
 \renewcommand{\arraystretch}{1.3}
 \begin{tabular}{l | r  r r  }\hline\hline
    & LMA & LMN & FPSQP\\ \hline
    Stopping criterion &  $\|\bm{r}(\bm{\theta}_k)\|\leq10^{-10}$&  $\|\bm{r}(\bm{\theta}_k)\|\leq10^{-10}$& $\|\bm{\Delta}\bm{\theta}_k \|\leq 10^{-6}$\\
    Iterations & $13$ & $22$ &-\\
    Price evaluations per iteration & $1.08n$ & $1.70n$ & $6.00n$\\
\hline\hline
 \end{tabular}\label{tab:solverCompare}
 \end{center}
 \end{table}
LMA requires about $n$ pricing function evaluation per step.
LMN requires more for the gradient approximation, but the difference is not large since LMN uses a rank-one update for the subsequent Jacobian matrices.
FPSQP requires about 5.5 times more than that of LMA and achieves only a lower accuracy for the stopping criterion for the gradient.

We tested our method also on a few realistic model parameterisations. In Table~\ref{tab:typicalcases}, we present three test cases that are representative respectively for long-dated FX options, long-dated interest rate options and equity options \cite{andersen2007efficient}. They are believed to be prevalent and challenging for the simulation of Heston model \cite{glasserman2011gamma}.
\begin{table}[h]
 \caption{Test cases with realistic Heston model parameters. Case I: long-dated FX options. Case II: long-dated interest rate options. Case III: equity options.}
 \begin{center}\footnotesize
 \renewcommand{\arraystretch}{1.3}
 \begin{tabular}{l | r r r }\hline\hline
    & Case I & Case II &  Case III \\ \hline
 $\kappa^*$ & 0.50     & 0.30    & 1.00   \\
 $\bar{v}^*$  & 0.04   & 0.04   & 0.09 \\
 $\sigma^*$  & 1.00     &0.90      & 1.00 \\
 $\rho^*$      & -0.90    & -0.50    & -0.30     \\
 $v_0^*$      & 0.04    &0.04    & 0.09 \\ \hline\hline
 \end{tabular}\label{tab:typicalcases}
 \end{center}
 \end{table}
Each component of the initial guess is an independent uniformly distributed random number in the $\pm10\%$ range of the corresponding optimum.
This choice is due to the fact that practitioners usually choose the initial guess as the last available estimation which is expected to be close to the solution if the calibration is frequent enough and the market does not change drastically.
We test each case with 100 initial guesses.
Our previous test range in Table~\ref{tab:parameter_initialpoint} has covered these cases too, but here we would like to focus on the performance of our method when applied to these typical examples and thus justify its computational efficiency and robustness for practical application.
The information about the convergence as an average of the 100 initial guesses is given in Table~\ref{tab:typicalcases_result}. For the practical cases with initial guesses in the vicinity, it takes less than or around one second to obtain the optimal solution. 

 \begin{table}[ht]
 \caption{Calibration results for three typical realistic cases, reporting an average on 100 initial guesses for each of them.}
 \begin{center}\footnotesize
 \renewcommand{\arraystretch}{1.3}
 \begin{tabular}{ l l | r r r } \hline\hline
   & & Case I           & Case II & Case III \\ \hline
  & $|\kappa^\dag - \kappa^*|$       & $2.87\times 10^{-2}$      & $1.35\times 10^{-3}$   & $1.20\times 10^{-3}$  \\
\multirow{1}{*}{Absolute}& $|\bar{v}^\dag-\bar{v}^*|$       & $4.80\times 10^{-3}$   &  $4.52\times 10^{-5}$  & $2.11\times 10^{-5}$  \\
\multirow{1}{*}{deviation}  & $|\sigma^\dag-\sigma^*|$         & $5.29\times 10^{-2}$      & $7.48\times 10^{-4}$    &  $3.94\times 10^{-4}$ \\
\multirow{1}{*}{from $\bm{\theta}^*$}& $|\rho^\dag-\rho^*|$             & $3.65\times 10^{-2}$     & $1.69\times 10^{-5}$  &  $1.46\times 10^{-5}$  \\
& $|v_0^\dag-v_0^*|$              & $2.14\times 10^{-3}$     & $1.46\times 10^{-5}$   &  $1.07\times 10^{-5}$ \\ \hline
 & $\|\bm{r}_0\|$ & $2.70\times 10^{-4}$  & $4.51\times 10^{-5}$ &$1.02\times 10^{-4}$ \\
\multirow{1}{*}{Error} & $\|\bm{r}^\dag\|$  & $1.12\times 10^{-4}$ &$9.24\times 10^{-11}$ & $3.33\times 10^{-11}$ \\
\multirow{1}{*}{measure}  &$\|{\bm{J}^\dag}\bm{e}\|_\infty$ & $1.77\times 10^{-1}$ & $4.63\times 10^{-6}$& $4.15\times 10^{-6}$\\
 & $\|\Delta\bm{\theta}^\dag\|$ & $6.88\times 10^{-21}$ & $1.63\times 10^{-8}$ & $5.10\times 10^{-5}$\\ \hline
& CPU time & 0.40 & 1.11 & 0.15\\
\multirow{1}{*}{Computational}  &  LM iterations             & 16.83     & 51.52   & 6.86\\
\multirow{1}{*}{cost}  & Price evaluations       & 23.38     & 52.60   & 7.86\\
 &  Gradient evaluations & 16.83     & 51.52   & 6.86\\
 &  Linear systems solved & 23.38   & 51.52   & 6.86 \\ \hline\hline
\end{tabular}\label{tab:typicalcases_result}
\end{center}
\end{table}

\section{Conclusion}\label{sec:conclusion}
We proposed a new representation of the Heston characteristic function which is continuous and easily derivable.
We derived the analytical form of the gradient of the Heston option pricing function with respect to the model parameters.
The result can be applied in any gradient-based algorithm.
An algorithm for a full and fast calibration of the Heston model is given.
The LM method succeeds in finding the global optimal parameter set within a reasonable number of iterations.
The method is validated by randomly generated parameterisations as well as three typical cases of Heston model parameterisations for long-dated FX options, long-dated interest rate options and equity options.
The resulting parameters can replicate the volatility surface with an $l_2$-norm error of $10^{-10}$ and an $l_1$-norm error around $10^{-7}$.
The cheap computational cost and the stable performance for different initial guesses make the proposed method suitable for the purpose of high-frequency trading. Several numerical issues are discussed.
We also present the final Hessian matrix and contours of the objective function. We point out that either a rescaling of the parameters or a low tolerance level is needed to find the global optimum.

\section*{Acknowledgement}
We thank Gianluca Fusai and Giuseppe Di Poto for useful comments. The support of the Economic and Social Research Council (ESRC) in funding the
Systemic Risk Centre is gratefully acknowledged (grant number ES/K002309/1).

\bibliography{CalibHeston}
\bibliographystyle{model5-names}

\end{document}